\algrenewcommand\algorithmicrequire{\textbf{\quad Input:}}
\algrenewcommand\algorithmicensure{\textbf{\quad Output:}}
\theoremstyle{plain}
\newtheorem{theorem}{Theorem}[section]
\newtheorem{lemma}[theorem]{Lemma}
\theoremstyle{definition}
\newcommand{\qedsymb}{\hfill{\rule{2mm}{2mm}}}
\renewenvironment{proof}{\begin{trivlist} \item[\hspace{\labelsep}{\bf \noindent Proof.\/}] }{\qedsymb\end{trivlist}}%
\newcommand{\bs}[1]{\boldsymbol{#1}}
\newcommand{\ceil}[1]{\lceil #1 \rceil}%
\begin{document}

\begin{titlepage}

\title{Improved Approximation for Orienting Mixed Graphs}
\author{
    Iftah Gamzu\thanks{Tel Aviv University.
    Email: \href{mailto:iftah.gamzu@cs.tau.ac.il}{\tt iftah.gamzu@cs.tau.ac.il}.}
    \and
    Moti Medina\thanks{Tel Aviv University.
    Email: \href{mailto:medinamo@eng.tau.ac.il}{\tt medinamo@eng.tau.ac.il}.}
}
\date{}
\maketitle

\begin{abstract}
An instance of the maximum mixed graph orientation problem
consists of a mixed graph and a collection of source-target vertex
pairs. The objective is to orient the undirected edges of the
graph so as to maximize the number of pairs that admit a directed
source-target path. This problem has recently arisen in the study
of biological networks, and it also has applications in
communication networks.

In this paper, we identify an interesting local-to-global
orientation property. This property enables us to modify the best
known algorithms for maximum mixed graph orientation and some of
its special structured instances, due to Elberfeld et al.~(CPM
'11), and obtain improved approximation ratios. We further proceed
by developing an algorithm that achieves an even better
approximation guarantee for the general setting of the problem.
Finally, we study several well-motivated variants of this
orientation problem.
\end{abstract}


\thispagestyle{empty}
\end{titlepage}

\section{Introduction} \label{sec:intro}
An instance of the \emph{maximum mixed graph orientation} problem
consists of a \emph{mixed} graph $G = (V, E_\mathrm{D} \cup
E_\mathrm{U})$ with $n$ vertices, such that $E_\mathrm{D}$ and
$E_\mathrm{U}$ indicate the sets of directed and undirected edges,
respectively. An additional ingredient of the input is a
collection $P \subseteq V \times V$ of source-target vertex pairs.
A source-target vertex pair $(s,t) \in P$ is called a
\emph{request}. The objective is to orient $G$ in a way that
maximizes the number of satisfied requests. An \emph{orientation}
of $G$ is a directed graph $\vec{G} = (V, E_\mathrm{D} \cup
\vec{E_\mathrm{U}})$, where $\vec{E_\mathrm{U}}$ is a set of
directed edges obtained by choosing a single direction for each
undirected edge in $E_\mathrm{U}$. A request $(s,t)$ is said to be
\emph{satisfied} under an orientation $\vec{G}$ if there is a
directed path from $s$ to $t$ in $\vec{G}$.

One may assume without loss of generality that the mixed graph $G$
is \emph{acyclic}, that is, a graph that has no cycles. This
assumption holds since any instance of maximum mixed graph
orientation can be reduced to another instance in which the
underlying mixed graph is acyclic without affecting the number of
requests that can be
satisfied~\cite{SilverbushES11,ElberfeldSDSS11}. Indeed, if the
input graph contains cycles, one can sequentially contract them
one after the other. In each step, the undirected edges of an
arbitrary cycle are all oriented in the same direction. In
particular, if this cycle contains directed edges then the
undirected edges are oriented in a consistent way with those
edges. As a result, every pair of vertices on this cycle admits a
directed path between them, and thus, the cycle can be contracted.
One can easily validate that the resulting mixed acyclic graph
consists of undirected components, each of which must be an
undirected tree, and those components are connected by directed
edges in a way that does not produce cycles. The maximum mixed
graph orientation problem draws its interest from applications in
network biology and communication networks:

\medskip \noindent {\bf Network biology.}
Recent technological advances, such as yeast two-hybrid
assays~\cite{Fields05} and protein co-immunoprecipitation
screens~\cite{Gavin02}, enable detecting physical interactions in
the cell, leading to protein-protein interaction (PPI) networks.
One major caveat of those PPI measurements is that they do not
reveal information about the directionality of the interactions,
namely, the directions in which the signal flows. Since PPI
networks serve as the skeletons of signal transduction in the
cell, inferring the hidden directionality information may provide
insights to the inner working of the cell. Such an information may
be inferred from causal relations in those
networks~\cite{YeangIJ04}. One such source of causal relations is
perturbation experiments, in which a gene is perturbed (cause) and
as a result, other genes change their expression levels (effects).
A change of expression of a gene suggests that the corresponding
proteins admit a path in the network, and in particular, it is
assumed that there must be a directed path from the causal gene to
the affected gene.

Up until this point in time, the above-mentioned scenario can be
modeled as a special instance of the maximum mixed graph
orientation problem in which one is interested to orient the edges
of an \emph{undirected} network in a way that maximizes the number
of cause-effect pairs that admit a directed path from the causal
gene to the affected gene. However, in the more accurate
biological variant, there are several interactions whose
directionality is known in advance. For instance, protein-DNA
interactions are naturally directed from a transcription factor to
its regulated genes, and some PPIs, like kinase-substrate
interactions, are known to transmit signals in a directional
fashion. Therefore, in general, the input network is a mixed
graph.

\medskip \noindent {\bf Communication networks.}
A unidirectional communication network consists of communication
links that allow data to travel only in one direction. One main
benefit of such communication links is that the data of the device
on one side is kept confidential while it may still access the
data of the device on the other side. As a consequence,
unidirectional networks are most commonly found in high security
environments, where a connection may be made between devices with
differing security classifications. For example, unidirectional
communication links can be used to facilitate access to a
vulnerable domain such as the Internet to devices storing
sensitive data. The maximum mixed graph orientation problem
captures the interesting scenario in which one is interested to
design a unidirectional network that maximizes the number of
connection requests that can be satisfied in a secure way. We
remark that unidirectional networks have also been studied in
distributed and wireless ad hoc settings (see,
e.g.,~\cite{AfekG94,AfekB98,MarinaD02} and the references
therein), where a common focus is on algorithmic questions that
arise in a given unidirectional network. Here, we are rather
interested in the question of how to design such a network while
optimizing some performance guarantees.

\subsection{Previous work}
Arkin and Hassin~\cite{ArkinH02} seem to have been the first to
study the problem of orienting mixed graphs. They focused on the
decision problem corresponding to maximum mixed graph orientation,
and demonstrated that it is NP-complete. Elberfeld et
al.~\cite{ElberfeldSDSS11} observed that the reduction in their
proof implies that the maximum mixed graph orientation problem is
NP-hard to approximate to within a factor of $7/8$. Silverbush,
Elberfeld, and Sharan~\cite{SilverbushES11} devised a
polynomial-size integer linear program formulation for this
problem, and evaluated its performance experimentally. Recently,
Elberfeld et al.~\cite{ElberfeldSDSS11} developed several
polylogarithmic approximation algorithms for special instances of
the problem in which the underlying graph is tree-like, e.g., when
the graph has bounded treewidth. In addition, they developed a
greedy algorithm for the general setting that achieves $\Omega(1 /
(M^{c} \log n))$-approximation, where $M = \max\{n,|P|\}$ and $c =
1/\sqrt{2} \approx 0.7071$.

Medvedovsky et al.~\cite{MedvedovskyBZS08} initiated the study of
the special setting of maximum graph orientation in which the
underlying graph is undirected, that is, when there are no
pre-directed edges. They proved that it is NP-hard to approximate
this problem to within a factor of $12/13$, even when the graph is
a star. They also proposed an exact dynamic-programming algorithm
for the special case of path graphs, and a $\Omega(1 / \log
n)$-approximation algorithm for the general problem. Gamzu, Segev
and Sharan~\cite{GamzuSS10} utilized the framework developed in
\cite{GamzuS10a} to obtain an improved $\Omega(\log\log n / \log
n)$-approximation ratio (see also~\cite{ElberfeldBGMSSZS11}). Very
recently, Dorn et al.~\cite{DornHKNU2011} studied this problem
from a parameterized complexity point of view. They presented
several fixed-parameter tractability results. Further research
focused on other variants of this undirected orientation problem.
For example, Hakimi, Schmeichel, and Young~\cite{HakimiSY97}
studied the special setting in which the set of requests contains
all vertex pairs, and developed an exact polynomial-time
algorithm.

\subsection{Our results}
We identify a useful structural property of requests crossing
through a junction vertex. Informally, this property guarantees
that if a set of requests is locally satisfiable then it can also
be satisfied globally. Using this property, we can slightly modify
the algorithms developed by Elberfeld et
al.~\cite{ElberfeldSDSS11}, and obtain improved approximation
ratios. For example, we eliminate a logarithmic factor from their
polylogarithmic approximation ratio for the case that the
underlying graph has bounded treewidth. These results appear in
Section~\ref{sec:localtoglobal}. Although the local-to-global
property can be used in conjunction with the algorithm of
Elberfeld et al.~\cite{ElberfeldSDSS11} to obtain an improved
approximation guarantee for the general setting, we proceed by
developing an improved $\Omega(1 / (n |P|)^{1/3})$-approximation
algorithm for this problem. Our algorithm is based on a greedy
approach that employs the local-to-global property in a novel way.
The specifics of this algorithm are presented in
Section~\ref{sec:general}. We also study two well-motivated
variants of the orientation problem, and most notably, show
hardness results for them. Further details are provided in
Section~\ref{sec:variants}.

\section{From Local to Global Orientations} \label{sec:localtoglobal}
In this section, we identify a useful structural property of
requests crossing through a junction vertex. Informally, this
property guarantees that if there is an orientation of the
\emph{local} neighborhood of a vertex that locally satisfies a set
of requests then it can be extended to a \emph{global} orientation
of the complete graph which satisfies the same set of requests.
Finding a local orientation that maximizes the local
satisfiability is a relatively easy task, namely, it admits a
constant factor approximation algorithm. As a consequence, we can
slightly modify the algorithms developed by Elberfeld et
al.~\cite{ElberfeldSDSS11} so they utilize this property, and
obtain improved approximation ratios. For example, we eliminate a
logarithmic factor from their polylogarithmic approximation ratio
for the special case that the underlying graph has bounded
treewidth.

We associate each request $(s,t) \in P$ with the shortest path $p$
between $s$ and $t$ in the underlying graph. Note that in case
there are several shortest paths for a request, we associate it
with one of them arbitrarily. We now introduce some notation and
terminology. To better understand the suggested notation, we refer
the reader to the concrete example in Figure~\ref{fig:local}.
\begin{itemize}
\item The \emph{local neighborhood} of a vertex $v$ is the
subgraph $G_v$ that consists of $v$, all edges incident on $v$,
and all vertices adjacent to $v$. Notice that the local
neighborhood graph is a star.

\item Let $P_v$ be the set of shortest paths of requests that
cross $v$, and let $P'_v$ be the corresponding set of \emph{local
paths}, that is, the paths of $P_v$ confined to the local
neighborhood of $v$. More precisely, each (global) path $p \in
P_v$ gives rise to a (local) path $p' \in P'_v$ defined as the
intersection of $p$ with the local neighborhood of $v$.
Furthermore, for each $p' \in P'_v$, we define its local endpoints
$s'$ and $t'$ to be the closest vertices to $s$ and $t$ on $p$
that also appear on $p'$, respectively.

\item The \emph{local graph orientation} problem corresponding to
vertex $v$ is defined with respect to the local neighborhood graph
$G_v$ and the set of local paths $P'_v$. The goal is to orient the
undirected edges of $G_v$ in a way that maximizes the number of
satisfied paths in $P'_v$. A path is said to be \emph{satisfied}
if there is a directed path between its source and target vertices
under the orientation.
\end{itemize}

\begin{figure}[!hbt]
\centerline{ \scalebox{0.50}{ 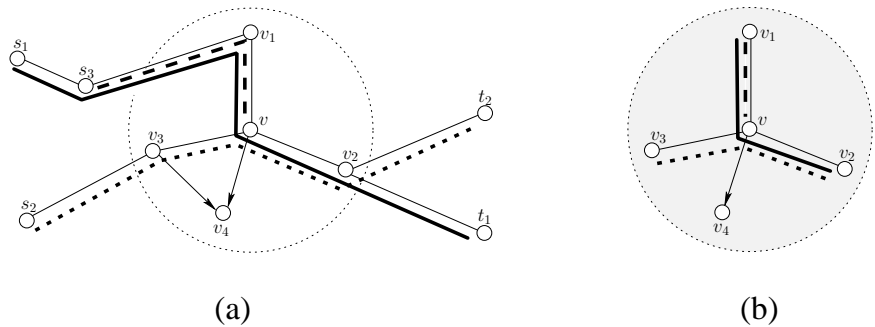 } }
\caption{(a) Suppose $P = \{(s_1,t_1), (s_2,t_2), (s_3 , v)\}$ is
the set of requests, and note that the shortest paths of these
requests are marked with the heavy lines. Notice that all these
paths cross $v$. (b) The local neighborhood of $v$, and the
corresponding set of local paths. For example, notice that the
local endpoints of the request $(s_1,t_1)$ are $s'_1 = v_1$ and
$t'_1 = v_2$.} \label{fig:local}
\end{figure}

\begin{lemma} \label{lemma:localtoglobal}
Given an orientation of $G_v$ that satisfies a set of local paths
$S' \subseteq P'_v$ then there is an orientation of $G$ that
satisfies the corresponding set of global paths $S \subseteq P_v$.
\end{lemma}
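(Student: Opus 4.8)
The plan is to construct the sought global orientation of $G$ directly from the paths in $S$, and to invoke the given local orientation of $G_v$ only to guarantee that this construction is consistent. Concretely: orient every undirected edge that lies on at least one path of $S$ in the direction in which that path traverses it, keep all pre-directed edges as given, and orient every remaining undirected edge arbitrarily. If this rule is well-defined --- i.e.\ if no two paths of $S$ ask for opposite orientations of the same undirected edge --- then we are done, because for each request in $S$ its associated path becomes a directed source-to-target path: its undirected edges are oriented along it by construction, and its pre-directed edges are traversed correctly (we may assume at the outset that any request whose associated path uses a pre-directed edge backwards has been discarded, as it is unsatisfiable). So the entire content of the lemma is this consistency claim, and this is exactly where we will use the hypothesis that $S'$ is satisfiable \emph{locally}, together with the fact that, $G$ being acyclic, its underlying graph is a forest and hence the associated path of each request is unique.

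I would prove consistency by contradiction. Suppose $p_1,p_2\in S$, associated with requests $(s_1,t_1)$ and $(s_2,t_2)$, both traverse a common undirected edge $e=\{a,b\}$ but in opposite directions; relabel so that $p_1$ traverses $a\to b$ and $p_2$ traverses $b\to a$. Since the underlying graph is a forest, $e$ is a bridge; let $A$ and $B$ be its two sides, with $a\in A$, $b\in B$. The directions of traversal then force $s_1\in A$, $t_1\in B$, $s_2\in B$, $t_2\in A$. Because $p_1,p_2\in P_v$, the vertex $v$ lies on both paths, hence in $A$ or in $B$; assume $v\in A$, the case $v\in B$ being symmetric. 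Then $v$ and $a$ both lie on $p_1$ and both lie on $p_2$, so each of $p_1,p_2$ contains, as a sub-path, the unique $v$-to-$a$ path of the forest.

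Let $u$ be the second vertex of that $v$-to-$a$ path (or $u=b$ if $v=a$); then $\{v,u\}$ is an edge incident to $v$ that lies on both $p_1$ and $p_2$. Along $p_1$ the edge $e$ is reached only after leaving $v$ through $u$, so $\{v,u\}$ is traversed $v\to u$; along $p_2$ the edge $e$ is traversed before $v$ is reached through $u$, so $\{v,u\}$ is traversed $u\to v$. But $\{v,u\}$ is an edge of the star $G_v$, and since $u$ appears on $p_1$ and on $p_2$, the edge $\{v,u\}$ belongs to both local paths $p'_1$ and $p'_2$; as $p_1,p_2\in S$ we have $p'_1,p'_2\in S'$, so the given local orientation of $G_v$ satisfies both $p'_1$ and $p'_2$ --- which would force it to orient $\{v,u\}$ simultaneously as $v\to u$ and as $u\to v$, an impossibility. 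This contradiction establishes consistency, and hence the lemma.

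The step I expect to be the crux is precisely this last reduction: showing that a clash on an arbitrary, possibly far-away, edge $e$ must already manifest as a clash on the single edge of the star $G_v$ through which both offending paths pass at $v$ --- this is where the acyclicity of $G$ (uniqueness of paths) does the real work. Everything else --- that the rule yields a bona fide orientation, and that it satisfies every request of $S$ --- is routine bookkeeping.
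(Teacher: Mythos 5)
Your overall strategy matches the paper's: orient every path of $S$ from its source to its target, and reduce the lemma to showing that no two paths of $S$ can conflict on an undirected edge. The gap is in the step you yourself flag as the crux. You assume that because $G$ is acyclic its underlying undirected graph is a forest, so that $e$ is a bridge, the $v$-to-$a$ path is unique, and both offending paths must therefore traverse the \emph{same} edge of the star $G_v$ in opposite directions. But ``acyclic'' for a mixed graph here means that there is no cycle that can be consistently traversed respecting the pre-directed edges; the contraction preprocessing described in the introduction can only eliminate such orientable cycles (a cycle whose directed edges point inconsistently around it cannot be contracted without changing the instance). The underlying undirected graph may therefore still contain cycles: for instance $v\to x\to a$, $v\to y\to b$ with $a$--$b$ undirected has an underlying $5$-cycle but no orientable cycle. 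In such a graph $e$ need not be a bridge, the two paths need not contain a common $v$-to-$a$ subpath, and in particular they need not share any edge of $G_v$ --- so the clash at $e$ need not ``manifest'' at the star at all, and your reduction collapses.

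A telling symptom is that your argument never uses the hypothesis that the associated paths are \emph{shortest} paths, yet the lemma is false without it: in the example above, the requests $(v,b)$ and $(v,a)$ with the non-shortest paths $v\to x\to a$--$b$ and $v\to y\to b$--$a$ are both satisfied by any local orientation at $v$ (their local paths use disjoint pre-directed edges), yet they conflict on $a$--$b$. The paper's proof splits on where the conflicting edge sits relative to $v$. When it lies after $v$ on both paths (or before $v$ on both), the local orientation plays no role: the contradiction comes from the shortest-path property, via the two distance inequalities $d_1+1\le d_2$ and $d_2+1\le d_1$. Only when the edge lies after $v$ on one path and before $v$ on the other does local satisfiability enter --- it forces the two paths to use \emph{different} edges of $G_v$, so concatenating the $v$-to-$v_1$ subpath of one with the $v_1$-to-$v$ subpath of the other yields an orientable closed walk, contradicting acyclicity. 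Your argument is fine in the special case where the underlying graph really is a forest (there Case~I cannot arise and your bridge argument settles Case~II), but it does not prove the lemma as stated.
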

\begin{proof}
We argue that if two local paths $p'_1,p'_2 \in S'$ then the
corresponding global paths $p_1,p_2 \in S$ cannot be in conflict.
The paths $p_1$ and $p_2$ are said to be \emph{in conflict} if
they have a mutual undirected edge that gets a different direction
when the edges of $p_1$ are consistently oriented from its source
vertex to its target vertex and when the edges of $p_2$ are
consistently oriented from its source vertex to its target vertex.
Notice that establishing this argument completes that proof of the
lemma since none of the paths of $S$ can be in conflict with
another path in $S$, and therefore, all the paths in $S$ can be
simultaneously satisfied by consistently orienting each one of
them from its source vertex to its target vertex. Note that after
one orients those paths, the remaining undirected edges of the
graph can be oriented in some arbitrary way.

For the purpose of establishing the above argument, let us suppose
that $p_1$ and $p_2$ are in conflict, and attain a contradiction.
Since $p_1$ and $p_2$ are in conflict then there is an undirected
edge $e = (v_1,v_2) \in E_\mathrm{U}$ that gets a different
direction when consistently orienting each one of $p_1$ and $p_2$
from its source vertex to its target vertex. Let us assume without
loss of generality that edge $e$ is the closest to $v$ from all
conflicting edges. We next present a case analysis that depends
whether the edge $e$ appears before or after the position of
vertex $v$ on each of paths $p_1$ and $p_2$. Essentially, there
are two main cases. To better understand the used notation, we
refer the reader to the concrete examples in
Figure~\ref{fig:inconflict}.

\smallskip \noindent {\bf Case I: edge $\bs{e}$ appears after
vertex $\bs{v}$ in both $\bs{p_1}$ and $\bs{p_2}$.} Let us assume
without loss of generality that $v_1$ is closer to $v$ than $v_2$
on $p_1$, and $v_2$ is closer to $v$ than $v_1$ on $p_2$. Let
$d_1$ be the distance between $v$ and $v_1$ on $p_1$, and $d_2$ be
the distance between $v$ and $v_2$ on $p_2$. Since $p_1$ is a
shortest path between $s_1$ and $t_1$, it must also be a shortest
path between $v$ and $v_2$. Thus, $d_1 + 1 \leq d_2$. Similarly,
since $p_2$ is a shortest path between $s_2$ and $t_2$, it must
also be a shortest path between $v$ and $v_1$, and hence, $d_2 + 1
\leq d_1$. Summing together the above inequalities results in $d_1
+ d_2 + 2 \leq d_1 + d_2$, a contradiction.

We note that the case that the edge $e$ appears \emph{before}
vertex $v$ in both $p_1$ and $p_2$ can be handled along the same
lines with an adjustment to the relative position of $v$, e.g.,
the distances need to be defined from $v_1$ and $v_2$ towards the
junction vertex $v$.

\smallskip \noindent {\bf Case II: edge $\bs{e}$ appears after vertex
 $\bs{v}$ in $\bs{p_1}$ and before vertex $\bs{v}$ in $\bs{p_2}$.}
Let us assume without loss of generality that $v_1$ is closer to
$v$ than $v_2$ on both paths $p_1$ and $p_2$. Since $p'_1,p'_2 \in
S'$ we know that the edge on which $p_1$ leaves $v$ and the edge
on which $p_2$ enters $v$ must be different. This implies that the
subpath between $v$ and $v_1$ on $p_1$ and the subpath between
$v_1$ and $v$ on $p_2$ are different. Consequently, merging these
two subpaths creates a cycle in the graph. This contradicts the
fact that the graph is acyclic.

Note that the case that the edge $e$ appears after vertex $v$ in
$p_2$ and before vertex $v$ in $p_1$ is essentially identical to
the above case up to a renaming of the paths.
\begin{figure}[!hbt]
\centerline{ \scalebox{0.65}{ 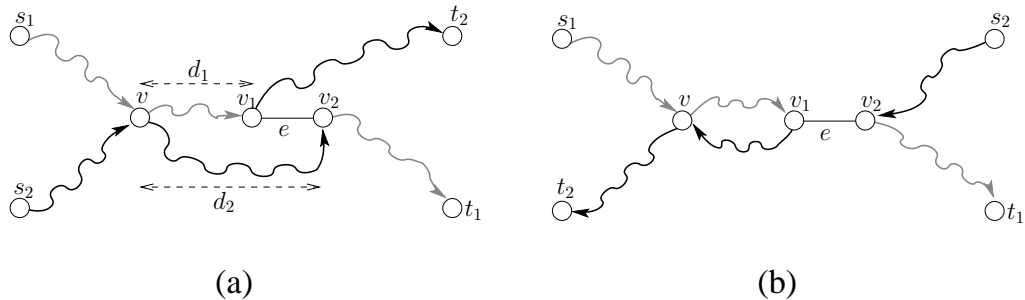 } }
\caption{(a) The case that $e$ appears after $v$ in both $p_1$ and
$p_2$. (b) The case that $e$ appears after $v$ in $p_1$ and before
$v$ in $p_2$.} \label{fig:inconflict}
\end{figure}~
\end{proof}

We now concentrate on the computational complexity of the local
graph orientation problem corresponding to a vertex $v$. One can
easily validate that this problem is equivalent to the maximum
undirected graph orientation problem on a star. Medvedovsky et
al.~\cite{MedvedovskyBZS08} demonstrated that this problem is
equivalent to the maximum directed cut problem. This latter
problem admits constant factor approximation algorithms (see,
e.g.,~\cite{FeigeG95,LewinLZ02}). In fact, one can easily verify
that a random orientation of the undirected edges in the local
neighborhood satisfies at least $1/4$ of the paths of $P'_v$ in
expectation. This follows since the maximal length of any path in
the local neighborhood is at most $2$. Furthermore, one can use
the method of conditional expectations to obtain a deterministic
orientation that satisfies at least $1/4$ of the paths, and
consequently, this approach is a $1/4$-approximation for this
problem. Combining this result with the local-to-global
orientation property exhibited in Lemma~\ref{lemma:localtoglobal}
implies the following theorem.

\begin{theorem} \label{th:localtoglobal}
Given a vertex $v$ and a set of requests $P_v$ whose shortest
paths cross $v$, there is a polynomial-time algorithm that
computes an orientation that satisfies $\Omega(|P_v|)$ requests.
\end{theorem}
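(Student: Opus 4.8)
The plan is to combine the two ingredients that the excerpt has just assembled: the $1/4$-approximation for the local graph orientation problem on the star $G_v$, and the local-to-global transfer guaranteed by Lemma~\ref{lemma:localtoglobal}. First I would form the local instance associated with $v$: take the local neighborhood graph $G_v$ (a star centered at $v$) and the set $P'_v$ of local paths obtained by restricting each shortest path in $P_v$ to $G_v$, recording the local endpoints $s',t'$ as defined above. Observe that $|P'_v| = |P_v|$, since every global path crossing $v$ contributes exactly one local path. Then I would invoke the deterministic $1/4$-approximation for local orientation (random orientation of the star's undirected edges satisfies each local path of length $\le 2$ with probability $\ge 1/4$, and the method of conditional expectations derandomizes this), producing in polynomial time an orientation of $G_v$ that satisfies a set $S' \subseteq P'_v$ with $|S'| \ge |P'_v|/4 = |P_v|/4$.

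Next I would apply Lemma~\ref{lemma:localtoglobal} to $S'$: since $G_v$'s orientation satisfies the local paths in $S'$, the lemma yields an orientation of the full graph $G$ that satisfies the corresponding set $S \subseteq P_v$ of global requests, and the correspondence $p' \mapsto p$ is a bijection, so $|S| = |S'| \ge |P_v|/4$. Hence the global orientation satisfies at least a quarter of the requests in $P_v$, which is $\Omega(|P_v|)$, and every step — constructing $G_v$ and $P'_v$, running conditional expectations on a star, and extending the orientation as in the proof of Lemma~\ref{lemma:localtoglobal} (consistently orienting the chosen paths from source to target and orienting leftover edges arbitrarily) — is polynomial in the input size. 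This establishes the theorem.

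There is essentially no hard part here; the theorem is a bookkeeping corollary of Lemma~\ref{lemma:localtoglobal} and the constant-factor local approximation. The only point that needs a word of care is the bijection between $P_v$ and $P'_v$ and the fact that \emph{satisfying} a local path $p'$ in $G_v$ (a directed $s'$-to-$t'$ path inside the star) is exactly the hypothesis Lemma~\ref{lemma:localtoglobal} consumes — so one should check that the local orientation algorithm is being run on the right objects, namely the paths of $P'_v$ with their designated local endpoints, and not on some coarser star instance that forgets which leaf-to-leaf (or leaf-to-center) pairs are requested. Once that is in place, the chain $|P_v| \to |P'_v| \xrightarrow{1/4} |S'| \to |S|$ closes immediately.
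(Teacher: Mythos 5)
Your proposal is correct and matches the paper's argument exactly: the paper also combines the $1/4$-approximation for the local star instance (random orientation plus conditional expectations, using that local paths have length at most $2$) with Lemma~\ref{lemma:localtoglobal} to lift the satisfied local paths to a global orientation. Nothing further is needed.
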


We can now modify the algorithms developed by Elberfeld et
al.~\cite{ElberfeldSDSS11} in accordance with
Theorem~\ref{th:localtoglobal}, and obtain the following improved
approximation ratios. We emphasize that the algorithms and their
analysis follow (up to our modification step) those presented by
Elberfeld et al.~\cite{ElberfeldSDSS11}, and thus, we defer them
to the appendix. The first two theorems present algorithms whose
approximation guarantees depend on the treewidth and feedback
vertex number of the underlying graph.

\begin{theorem} \label{th:treewisthalg}
There is a polynomial-time algorithm that finds an orientation
satisfying $\Omega(|P|/(k \log n))$ requests when the undirected
version of the underlying graph has bounded treewidth $k$.
\end{theorem}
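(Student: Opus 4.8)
The plan is to exploit Theorem~\ref{th:localtoglobal} as a black box inside a tree-decomposition framework. First I would fix a tree decomposition of the undirected version of the underlying graph of width $k$, and recall the standard fact that every shortest path $p$ associated with a request must ``pass through'' a bag: more precisely, along the decomposition tree there is a node whose bag separates (or contains) the two endpoints of $p$, so at least one vertex of that bag lies on $p$. Thus every request in $P$ crosses some vertex that belongs to some bag of the decomposition. Since the width is $k$, the total number of (bag, vertex) incidences is $O(nk)$ after cleaning the decomposition so that it has $O(n)$ bags.

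The next step is the counting/pigeonhole argument. Consider an optimal orientation satisfying $\opt = |P^*|$ requests, where $P^* \subseteq P$ is the satisfied set. Assign each request in $P^*$ to a vertex on its shortest path that lies in a common bag with its endpoints; this partitions (or at least covers) $P^*$ among at most $O(nk)$ vertices, but really we only need: there is a single vertex $v$ such that $|P_v \cap P^*| = \Omega(\opt / (\text{number of relevant vertices}))$. With $O(n)$ bags each of size $k+1$, there are $O(nk)$ relevant vertices, which would only give an $\Omega(|P|/(nk))$ bound — too weak. To recover the claimed $\Omega(|P|/(k\log n))$, I would instead use the recursive balanced-separator structure of the tree decomposition: a decomposition of width $k$ can be arranged so that it has depth $O(\log n)$. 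Then each shortest path, having its endpoints separated in the decomposition, crosses a bag at one of only $O(\log n)$ depths along the root-to-path route; charging requests to depths, some depth level $\ell$ carries $\Omega(\opt/\log n)$ satisfied requests, and within level $\ell$ the bags are vertex-disjoint on the separators, so summing $\Omega(|P_v|)$ over the $\le n$ relevant vertices at that level (via Theorem~\ref{th:localtoglobal} applied at the vertex maximizing $|P_v \cap P^*|$ among them) yields a single $v$ with $|P_v \cap P^*| = \Omega(\opt/(k\log n))$.

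Once such a vertex $v$ is identified, I would simply invoke Theorem~\ref{th:localtoglobal}: there is a polynomial-time algorithm producing a global orientation satisfying $\Omega(|P_v|)$ of the requests whose shortest paths cross $v$. Applying it with the $v$ above gives an orientation satisfying $\Omega(|P_v|) \ge \Omega(|P_v \cap P^*|) = \Omega(\opt/(k\log n)) = \Omega(|P|/(k\log n))$, where the last equality is trivial since $\opt \le |P|$ and we want a lower bound in terms of $|P|$ — one actually obtains $\Omega(\opt/(k\log n))$, which is at least $\Omega(|P|/(k\log n))$ only up to the approximation-ratio reading; stated as an approximation ratio it is $\Omega(1/(k\log n))$, i.e., $\Omega(\opt/(k\log n))$ requests satisfied. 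The algorithm itself enumerates all vertices $v$, runs the Theorem~\ref{th:localtoglobal} procedure for each, and returns the best orientation found; correctness follows because \emph{some} vertex achieves the pigeonhole bound even though the algorithm need not know which.

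The main obstacle I anticipate is making the charging argument tight enough to get $\log n$ rather than $n$ in the denominator: a naive ``every request crosses some bag vertex, there are $O(nk)$ of them'' argument loses a factor of $n$. The fix is to use a logarithmic-depth (recursively balanced) tree decomposition and charge satisfied requests to the $O(\log n)$ depth levels so that at the winning level the charged bags are essentially disjoint; this is exactly the step that the algorithm of Elberfeld et al. handles, and our only modification is to replace their within-star orientation subroutine (which satisfied a $\Theta(1/\log n)$ fraction locally, or otherwise lost a further logarithmic factor) by the cleaner Theorem~\ref{th:localtoglobal}, which satisfies a constant fraction of $P_v$ globally in one shot — this is precisely what removes one logarithmic factor from their ratio. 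I would also need the standard technical care that the tree decomposition can be computed (or approximated) in polynomial time for bounded $k$, and that contracting cycles to make $G$ acyclic does not increase the treewidth of the undirected version, which holds since edge contraction does not increase treewidth.
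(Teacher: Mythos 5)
Your high-level plan (a balanced, $O(\log n)$-depth decomposition plus Theorem~\ref{th:localtoglobal} at bag vertices) matches the paper's, but the pigeonhole step at the heart of your analysis is wrong, and the algorithm you extract from it cannot achieve the stated bound. You claim that at the winning depth level there is a \emph{single} vertex $v$ with $|P_v| = \Omega(|P|/(k\log n))$, and accordingly your algorithm tries every vertex, runs the local-to-global procedure once, and returns the best single-vertex orientation. This fails: at a fixed level of the recursion the requests charged to that level can be spread over $\Theta(n)$ pairwise disjoint bags, i.e.\ over $\Theta(nk)$ vertices, so the best single vertex collects only an $\Omega(1/(nk\log n))$ fraction. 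Concretely, let $G$ be a path $v_1,\dots,v_n$ (treewidth $1$) with requests $(v_i,v_{i+1})$ for all $i$; then $|P_v|\le 2$ for every vertex, so any best-single-vertex algorithm guarantees only $O(1)$ satisfied requests, while the theorem promises $\Omega(n/\log n)$. The paper's proof avoids this by \emph{combining} local orientations across the bags of one level: a path assigned to level $i$ of the centroid recursion avoids all separator vertices of earlier levels, hence lies entirely in one connected piece of $G$ after those separators are removed; the pieces at a fixed level are vertex-disjoint, so the $\Omega(|P_v|)$-orientations obtained at (the best of the at most $k+1$ vertices of) each piece's centroid bag can be merged without conflict, yielding $\Omega(|\mathcal{C}_i|/k)$ satisfied requests for the entire level $\mathcal{C}_i$. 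The factor $k$ comes only from choosing the best vertex \emph{within one bag}, and the factor $\log n$ from choosing the best level; no factor of $n$ ever enters.

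A secondary issue: you charge only the optimally satisfied requests $P^*$ and conclude with $\Omega(\opt/(k\log n))$, whereas the theorem asserts the stronger absolute guarantee of $\Omega(|P|/(k\log n))$ satisfied requests (a fraction of \emph{all} requests); since $\opt\le|P|$, your bound does not imply the claimed one. That part is easily repaired, because every request's shortest path crosses some separator vertex at some level, so one can charge all of $P$ rather than $P^*$. The single-vertex pigeonhole above is the step that genuinely breaks the argument.
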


\begin{theorem} \label{th:feedbackvertexalg}
There is a polynomial-time algorithm that finds an orientation
satisfying $\Omega(|P|/(k + \log n))$ requests, where $k$ is the
minimum number of vertices whose deletion turns the undirected
version of the underlying graph into a tree.
\end{theorem}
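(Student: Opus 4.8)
The plan is to combine Theorem~\ref{th:localtoglobal} with the bounded-treewidth algorithm of Theorem~\ref{th:treewisthalg} through a simple case split driven by a feedback vertex set. First, compute a feedback vertex set $F$ of the underlying undirected graph of $G$; this can be done in polynomial time up to a constant factor in its size, so we may take $|F| = O(k)$, which affects only the hidden constants. Recall that at the beginning of this section every request is associated with a fixed shortest path in the underlying graph; using these, partition $P$ into the set $P_F$ of requests whose shortest path visits at least one vertex of $F$, and the set $P_T = P \setminus P_F$ of requests whose shortest path (endpoints included) avoids $F$ entirely. The algorithm produces two candidate orientations of $G$, one aimed at $P_F$ and one aimed at $P_T$, and outputs the one satisfying more requests.

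For the $P_F$ part, assign each request in $P_F$ to one vertex of $F$ that lies on its associated shortest path. By averaging, some vertex $v \in F$ receives at least $|P_F|/|F| = \Omega(|P_F|/k)$ requests, and by construction the shortest paths of all of them cross $v$. Applying Theorem~\ref{th:localtoglobal} to $v$ yields, in polynomial time, an orientation of the entire graph $G$ that satisfies $\Omega(|P_F|/k)$ of these requests.

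For the $P_T$ part, observe that every request of $P_T$ has its associated shortest path contained in the induced subgraph $H = G[V \setminus F]$, whose underlying undirected graph is a forest and therefore has treewidth one. Run the algorithm of Theorem~\ref{th:treewisthalg} on the mixed graph $H$ with request set $P_T$; it returns an orientation of $H$ satisfying $\Omega(|P_T|/\log n)$ of them. Extend it to an orientation of all of $G$ by orienting every remaining undirected edge --- in particular every edge with an endpoint in $F$ --- arbitrarily. This cannot unsatisfy a request of $P_T$: its shortest path uses no such edge, so the directed $s$-to-$t$ path guaranteed inside $H$ is still a directed path in $G$.

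Finally, since at least one of $|P_F|, |P_T|$ is $\ge |P|/2$, the better of the two candidate orientations satisfies $\Omega(|P|/k)$ requests (if we are in the first regime) or $\Omega(|P|/\log n)$ requests (if we are in the second); as both quantities are $\Omega(|P|/(k+\log n))$, the claimed bound follows. The only nonroutine ingredient is the one already packaged in Theorem~\ref{th:localtoglobal}, namely that handling the requests routed through a single junction vertex of $F$ costs merely a constant factor rather than a logarithmic one. This is precisely where the $\log n$ factor is shaved off the $\Omega(|P|/(k\log n))$ guarantee that one would obtain by feeding the whole graph --- which has treewidth at most $k+1$ --- directly into Theorem~\ref{th:treewisthalg}; everything else is the elementary case split above, which is why we defer the detailed write-up to the appendix.
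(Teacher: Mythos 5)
Your proposal is correct and follows essentially the same route as the paper: compute an (approximate) feedback vertex set $F$, handle the requests whose shortest paths cross $F$ by concentrating on a single junction vertex via Theorem~\ref{th:localtoglobal} (losing a factor $O(k)$), handle the remaining requests inside the forest $G[V\setminus F]$ with a logarithmic-factor tree algorithm, and output the better orientation. The only cosmetic differences are that you find the good junction vertex by averaging over $F$ rather than by the paper's explicit first-crossed-vertex classification, and that you invoke Theorem~\ref{th:treewisthalg} with treewidth one for the forest part where the paper cites the mixed-tree orientation algorithm of Elberfeld et al.; both choices are equivalent.
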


We can also improve the approximation ratios of the algorithms
presented by Elberfeld et al.~\cite{ElberfeldSDSS11} for the
general case, in which there are no structural assumptions on the
graph, by a logarithmic factor.

\begin{theorem} \label{th:generalalg1}
There is a polynomial-time algorithm that approximates the maximum
mixed graph orientation problem to within a factor of $\Omega(1 /
\sqrt{\Delta |P|})$, where $\Delta$ is the maximum length of a
shortest source-target path in the graph.
\end{theorem}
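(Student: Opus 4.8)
The plan is to combine Theorem~\ref{th:localtoglobal} with a greedy ``packing'' argument, splitting into two cases according to how congested the most popular vertex is on the collection of shortest paths. First I would perform a harmless preprocessing step: discard every request that is not satisfiable at all (this leaves $\opt$ unchanged), and attach to each surviving request $(s,t)$ a shortest \emph{direction-respecting} path from $s$ to $t$ in $G$, i.e.\ one on which every pre-directed edge points from $s$ towards $t$. Such a path exists precisely because the request is satisfiable (project any satisfying orientation's $s$-$t$ path back to $G$ and shortcut cycles), and by the definition of $\Delta$ it has at most $\Delta$ edges, hence at most $\Delta+1$ vertices. Write $m=|P|$, let $P_v$ be the set of these shortest paths passing through a vertex $v$ (as in Section~\ref{sec:localtoglobal}), and set the threshold $\theta=\sqrt{m/\Delta}$. \textbf{Case 1: some vertex $v$ has $|P_v|\ge\theta$.} Then Theorem~\ref{th:localtoglobal} produces, in polynomial time, an orientation of $G$ satisfying $\Omega(|P_v|)=\Omega(\theta)$ requests.

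\textbf{Case 2: every vertex $v$ has $|P_v|<\theta$.} Here I would greedily build a family $\mathcal{Q}$ of pairwise vertex-disjoint shortest paths: repeatedly take any not-yet-blocked path $p$, add it to $\mathcal{Q}$, and block every path that shares a vertex with $p$. Since $p$ has at most $\Delta+1$ vertices and each vertex lies on fewer than $\theta$ of the shortest paths, each iteration removes at most $(\Delta+1)\theta$ paths from consideration, so the process runs for at least $m/((\Delta+1)\theta)=\Omega(m/(\Delta\theta))=\Omega(\theta)$ iterations. The paths in $\mathcal{Q}$ are pairwise vertex-disjoint, hence pairwise edge-disjoint, so orienting each of them consistently from its source to its target is a valid partial orientation (no undirected edge is assigned two directions, and every pre-directed edge on these paths is already oriented correctly because the paths respect directions); completing it by orienting the remaining undirected edges arbitrarily yields an orientation of $G$ under which all $\Omega(\theta)$ requests corresponding to $\mathcal{Q}$ are satisfied.

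In both cases the algorithm satisfies $\Omega(\theta)=\Omega(\sqrt{|P|/\Delta})$ requests, and since $\opt\le|P|$ we have $\sqrt{|P|/\Delta}=|P|/\sqrt{\Delta|P|}\ge\opt/\sqrt{\Delta|P|}$, which is the claimed $\Omega(1/\sqrt{\Delta|P|})$ approximation; the degenerate regime $|P|<\Delta$ (where $\theta<1$) is covered by simply satisfying a single request, since then $\opt\le|P|\le\sqrt{\Delta|P|}$. The only genuinely delicate point is the bookkeeping around pre-directed edges --- guaranteeing that the shortest path attached to each request can be oriented forward, and that orienting an edge-disjoint family of such paths simultaneously is consistent --- which is exactly what the preprocessing step is for; the remainder is the case split together with the congestion/packing count, and follows the lines of Elberfeld et al.~\cite{ElberfeldSDSS11} up to the use of Theorem~\ref{th:localtoglobal}.
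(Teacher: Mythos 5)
Your proof is correct and rests on the same two ingredients as the paper's: a greedy packing phase and an application of Theorem~\ref{th:localtoglobal} at a vertex crossed by $\Omega(\sqrt{|P|/\Delta})$ of the shortest paths. The organization differs in two genuine (if modest) ways. The paper's case split is adaptive rather than static: it greedily orients shortest paths one by one, discarding only the paths \emph{in conflict} with the oriented one (i.e., sharing an undirected edge that would receive opposite directions), and only when some pending path conflicts with more than $\sqrt{\Delta|P|}$ others does it observe that, since that path has at most $\Delta$ edges, one of its vertices is crossed by at least $\sqrt{\Delta|P|}/\Delta=\sqrt{|P|/\Delta}$ pending paths, to which Theorem~\ref{th:localtoglobal} is then applied. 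You instead test up front whether a $\sqrt{|P|/\Delta}$-congested vertex exists and, if not, pack pairwise \emph{vertex-disjoint} paths. Your discarding rule is cruder --- vertex-disjointness removes more paths per step than conflict does --- but the counting gives the same $\Omega(\sqrt{|P|/\Delta})$ guarantee, and it buys you a trivially consistent simultaneous orientation, whereas the paper's greedy relies on the conflict notion for feasibility. You are also more explicit than the paper about two points it leaves implicit: that the shortest paths must respect the pre-directed edges (so that orienting them forward is feasible and unsatisfiable requests can be pruned), and the degenerate regime $|P|<\Delta$. Both additions are sound, so the argument goes through.
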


\begin{theorem} \label{th:generalalg2}
There is a polynomial-time algorithm that approximates the maximum
mixed graph orientation problem to within a factor of $\Omega(1 /
M^{1/\sqrt{2}})$, where $M = \max\{n,|P|\}$.
\end{theorem}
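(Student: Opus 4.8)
The plan is to keep Elberfeld et al.'s general-case algorithm essentially intact and only replace the one sub-step in which it loses a logarithmic factor. Fix the length threshold $\ell := M^{\sqrt 2 - 1}$, so that $\sqrt{\ell M} = M^{1/\sqrt 2}$, and split the requests according to the length of their associated shortest path: let $P_{\le \ell}$ be the requests whose shortest path has at most $\ell$ edges and $P_{>\ell} := P \setminus P_{\le \ell}$. The algorithm runs two procedures and outputs the orientation satisfying more requests: (i) the algorithm of Theorem~\ref{th:generalalg1} on the instance $(G, P_{\le\ell})$; and (ii) Elberfeld et al.'s procedure for requests with long shortest paths on $(G, P_{>\ell})$, with the sub-step described below modified. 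Whichever of $P_{\le\ell}, P_{>\ell}$ contains at least half of the requests satisfied by an optimal orientation, the corresponding procedure will satisfy $\Omega(\opt / M^{1/\sqrt 2})$ requests, which proves the theorem.

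First I would verify branch~(i). Restricting the request set does not change the shortest path associated with any surviving request, so the instance $(G, P_{\le\ell})$ has $\Delta \le \ell$ and at most $|P| \le M$ requests; hence, by Theorem~\ref{th:generalalg1}, it satisfies $\Omega(1/\sqrt{\ell M})\cdot\opt(G,P_{\le\ell})$ of them. If at least half of the optimum is realized by requests of $P_{\le\ell}$ then $\opt(G,P_{\le\ell}) \ge \opt/2$, and this quantity is $\Omega(\opt/\sqrt{\ell M}) = \Omega(\opt/M^{1/\sqrt 2})$, as wanted. For branch~(ii), the key point is that the only super-constant loss in Elberfeld et al.'s long-path analysis is incurred when a set of requests whose shortest paths all cross a common junction vertex $v$ is satisfied by orienting the star around $v$, which they do via a $\Theta(1/\log n)$-approximate star (equivalently, tree) orientation. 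I would replace that sub-step by the routine behind Theorem~\ref{th:localtoglobal}, which from such a junction $v$ produces in polynomial time a \emph{global} orientation satisfying $\Omega(|P_v|)$ of the crossing requests --- a constant fraction, with no logarithmic loss. With this substitution their analysis of branch~(ii) goes through verbatim, each former $\Theta(\log n)$ factor turning into a $\Theta(1)$ factor; since the threshold $\ell$, hence the exponent $1/\sqrt 2$, is dictated by the combinatorics of their recursion and not by this sub-step, branch~(ii) then certifies $\Omega(\opt/M^{1/\sqrt 2})$ satisfied requests whenever at least half of the optimum lies in $P_{>\ell}$.

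The delicate point --- essentially the only place where one must do more than cite Elberfeld et al. --- is the correctness of the substitution. Lemma~\ref{lemma:localtoglobal} only guarantees that a \emph{single} star's local orientation extends to a global orientation satisfying the corresponding global requests, whereas branch~(ii) invokes the junction routine repeatedly, at different junction vertices and (if the procedure recurses) on different subinstances. I would argue, along the lines of Elberfeld et al.'s original correctness proof, that the subinstances handled in distinct invocations occupy pairwise conflict-free, indeed essentially edge-disjoint, portions of the graph, so that the per-request orientations they produce can be merged into one global orientation without creating a conflicting undirected edge; within each invocation, conflict-freeness is exactly what the proof of Lemma~\ref{lemma:localtoglobal} supplies. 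I would also recheck the bookkeeping in branch~(i), namely that restricting the request set preserves $\Delta \le \ell$ so that Theorem~\ref{th:generalalg1} applies as claimed. I expect this merging argument to be the main obstacle, albeit a routine one; the rest is a direct re-run of Elberfeld et al.'s analysis with the improved subroutine.
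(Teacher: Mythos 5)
Your proposal matches the paper's intent exactly: the paper gives no proof of Theorem~\ref{th:generalalg2}, stating only that it can be established along the same lines of Elberfeld et al.\ --- i.e., by rerunning their general-case analysis with the logarithm-losing junction subroutine replaced by Theorem~\ref{th:localtoglobal} --- which is precisely your plan, and your threshold $\ell = M^{\sqrt{2}-1}$ (so that $\sqrt{\ell M} = M^{1/\sqrt{2}}$) and your branch-(i) accounting via Theorem~\ref{th:generalalg1} are correct. Since the paper itself treats the long-path analysis of Elberfeld et al.\ as a black box, your branch (ii) is at the same level of rigor as the paper's (omitted) argument; the only small quibble is that their logarithmic loss at a junction vertex stems from reducing to a \emph{tree} orientation subproblem rather than from star orientation itself (which is constant-factor approximable as maximum directed cut), but this does not affect your substitution argument.
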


Note that we do not provide a proof for the latter theorem since
it can be established along the same lines
of~\cite{ElberfeldSDSS11}, but more importantly, since we next
present an algorithm with a better approximation guarantee.

\section{Improved Approximation for The General Case} \label{sec:general}
In this section, we develop a relatively simple $\Omega(1 / (n
|P|)^{1/3})$-approximation algorithm for the maximum mixed graph
orientation problem. Our algorithm is based on a greedy approach
that employs the local-to-global orientation property developed in
Section~\ref{sec:localtoglobal}.

The algorithm, formally described below, begins by associating
each request $(s_i,t_i) \in P$ with a shortest path $p_i$ between
$s_i$ and $t_i$ in the graph. Then, it greedily orients shortest
paths one after the other until all the remaining paths are in
conflict with many other paths. When this happens, the algorithm
concentrates on the vertex that is crossed by a maximal number of
paths, and utilizes the local-to-global orientation algorithm from
Theorem~\ref{th:localtoglobal} to complete the orientation of the
graph. Recall that two paths $p_1$ and $p_2$ are said to be
\emph{in conflict} if they have a mutual undirected edge that gets
a different direction when the edges of $p_1$ and $p_2$ are
consistently oriented from their source vertex to their target
vertex.

\begin{algorithm}
\caption{Greedy Orientation}\label{cap:Greedy}%
\begin{algorithmic}[1]
\Require A mixed graph $G$ and a collection $P \subseteq V \times V$ of requests %
\Ensure An orientation $\vec{G}$ of $G$ \smallskip %

\State Let $p_i$ be a shortest path for request $(s_i, t_i)\in P$ in $G$, and let $\mathcal{P} = \bigcup\{p_i\}$ %
\While{there is $p_i \in \mathcal{P}$ that is in conflict with less than $(n|P|)^{1/3}$ paths in $\mathcal{P}$} %
    \State Let $\mathcal{Q} \subseteq \mathcal{P}$ be the set of paths in conflict with $p_i$ %
    \State $G \leftarrow$ the graph that results by orienting the edges of $p_i$ from $s_i$ towards $t_i$ in $G$%
    \State $\mathcal{P} \leftarrow \mathcal{P} \setminus (\mathcal{Q} \cup \{p_i\})$ %
\EndWhile %
\State Let $v$ be a vertex that a maximal number of paths in $\mathcal{P}$ cross, and let $\mathcal{P}_v \subseteq \mathcal{P}$ be that set of paths%
\State $G \leftarrow$ the graph that results by executing the algorithm from Theorem~\ref{th:localtoglobal} with respect to $v$ and $\mathcal{P}_v$ \smallskip%
\State \textbf{return} $G$ %
\end{algorithmic}
\end{algorithm}

One can easily verify that the algorithm computes a feasible
orientation, namely, it assigns a single direction to each
undirected edge. This follows since no conflicting paths are
oriented during the main loop of the algorithm, and since the
algorithm from Theorem~\ref{th:localtoglobal} is known to compute
a feasible orientation. We next prove that the algorithm satisfies
$\Omega(1 / (n|P|)^{1/3})$-fraction of all requests. Clearly, this
implies that the algorithm achieves (at least) the same
approximation guarantee.

\begin{theorem}\label{thm:generalalg}
The greedy orientation algorithm satisfies $\Omega(1 /
(n|P|)^{1/3})$-fraction of all requests.
\end{theorem}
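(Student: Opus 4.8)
The plan is to bound the number of satisfied requests from below by analyzing the two phases of the algorithm separately and then balancing. During the while-loop, each iteration orients one path $p_i$ and removes from $\mathcal{P}$ the set $\mathcal{Q}$ of paths in conflict with $p_i$ together with $p_i$ itself. By the loop condition, $|\mathcal{Q}| < (n|P|)^{1/3}$, so each iteration destroys fewer than $(n|P|)^{1/3} + 1$ paths while guaranteeing that the one oriented path is satisfied. A key point I would stress here is that the greedily oriented paths are never in mutual conflict — at the moment $p_i$ is oriented, every path still in $\mathcal{P}$ that conflicts with it is discarded, so later oriented paths are conflict-free with all earlier ones — hence consistently orienting each $p_i$ from $s_i$ to $t_i$ really does satisfy all of them simultaneously (this is exactly the local-to-global flavour of Lemma~\ref{lemma:localtoglobal}, though here it follows directly from conflict-freeness). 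So if the loop runs for $r$ iterations, it satisfies $r$ requests and removes at most $r\big((n|P|)^{1/3}+1\big)$ paths from $\mathcal{P}$ in total.

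Next I would consider the two cases for when the loop terminates. If the loop empties $\mathcal{P}$, then $r\big((n|P|)^{1/3}+1\big) \geq |P|$, so $r = \Omega(|P| / (n|P|)^{1/3}) = \Omega(|P|^{2/3} / n^{1/3}) = \Omega(|P| / (n|P|)^{1/3})$, giving the desired fraction immediately. Otherwise, the loop stops because every remaining path in $\mathcal{P}$ is in conflict with at least $(n|P|)^{1/3}$ other paths in $\mathcal{P}$. Let $m = |\mathcal{P}|$ be the number of surviving paths; then the conflict graph on these $m$ paths has minimum degree at least $(n|P|)^{1/3}$, hence at least $m (n|P|)^{1/3} / 2$ conflicting pairs. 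Each conflicting pair shares an undirected edge, and by a pigeonhole argument over the (at most $n-1$, certainly at most $n$) undirected edges, some edge $e$ is shared by $\Omega(m (n|P|)^{1/3} / n)$ pairs of surviving paths. All paths through a fixed edge $e=(x,y)$ pass through both $x$ and $y$; in particular they all cross the vertex $x$, so the vertex $v$ chosen in the algorithm as the one crossed by the maximal number of surviving paths satisfies $|\mathcal{P}_v| \geq \sqrt{\text{(number of pairs through } e)} = \Omega\big(\sqrt{m (n|P|)^{1/3}/n}\big)$, using that a set of $k$ paths through $e$ produces at most $\binom{k}{2} < k^2$ pairs. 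By Theorem~\ref{th:localtoglobal}, the final step satisfies $\Omega(|\mathcal{P}_v|)$ requests, so this phase alone satisfies $\Omega\big(\sqrt{m (n|P|)^{1/3}/n}\big) = \Omega\big(\sqrt{m}\,(n|P|)^{1/6}/n^{1/2}\big)$ requests.

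Finally I would balance the two contributions. The loop satisfies $\Omega\big((|P| - m)/(n|P|)^{1/3}\big)$ requests, and the last step satisfies $\Omega\big(\sqrt{m}\,(n|P|)^{1/6}/\sqrt{n}\big)$ requests. If $m \leq |P|/2$, the first term is already $\Omega(|P|/(n|P|)^{1/3})$ and we are done. If $m > |P|/2$, then the second term is $\Omega\big(\sqrt{|P|}\,(n|P|)^{1/6}/\sqrt{n}\big) = \Omega\big(|P|^{1/2+1/6}/n^{1/2-1/6}\big) = \Omega\big(|P|^{2/3}/n^{1/3}\big) = \Omega\big(|P|/(n|P|)^{1/3}\big)$, as required. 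In either case the algorithm satisfies an $\Omega(1/(n|P|)^{1/3})$-fraction of all $|P|$ requests. The main obstacle — and the step that needs to be written carefully — is the double counting in the second case: first converting the minimum-degree bound in the conflict graph into many conflicting pairs, then pigeonholing those pairs onto a single undirected edge, and finally converting "many path-pairs through $e$" into "many paths through a single endpoint vertex $v$"; the exponent $1/3$ in the threshold $(n|P|)^{1/3}$ is precisely what makes the loop bound $|P|/(n|P|)^{1/3}$ match the last-step bound $\sqrt{|P|\cdot(n|P|)^{1/3}/n}$ after this chain of estimates.
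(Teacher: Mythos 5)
Your proposal is correct and reaches the stated bound, and its overall architecture (analyze the greedy loop and the final local-to-global step separately, count conflicting pairs, then invoke Theorem~\ref{th:localtoglobal}) matches the paper's. The technical heart of the second phase, however, is executed differently. The paper places one token per conflicting pair on a shared undirected edge, pigeonholes the tokens onto the local neighborhood of one of the $n$ \emph{vertices} to get $t = |\mathcal{P}_2|^{4/3}/(2n^{2/3})$ tokens there, and then converts $t$ tokens into $\Omega(\sqrt{t})$ crossing paths by (sketchily) solving the optimization $\min\{\sum_{e}(r_e+\ell_e)/2 : \sum_e r_e\ell_e = t\}$. You instead pigeonhole the conflicting pairs onto a single undirected \emph{edge} $e$ and use the elementary bound that $N$ pairs through $e$ force at least $\sqrt{2N}$ paths through an endpoint of $e$, since $N \le \binom{k}{2}$. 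Your route is cleaner in that it avoids the paper's multi-edge optimization claim entirely, but note that it leans on the fact that there are only $O(n)$ undirected edges; this is where the standing acyclicity assumption (the undirected part of the preprocessed mixed graph is a forest, hence at most $n-1$ undirected edges) is needed, and you should say so explicitly rather than parenthetically. Your explicit observation that the greedily oriented paths are pairwise conflict-free, so they can all be satisfied simultaneously, is also welcome --- the paper disposes of feasibility in a sentence before the theorem. Finally, your case split on $m \le |P|/2$ versus $m > |P|/2$ is just a repackaging of the paper's per-part ratios $|\mathcal{A}_i| = \Omega(1/(n|P|)^{1/3})\,|\mathcal{P}_i|$; both yield the same exponents.
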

\begin{proof}
Let $\mathcal{P} = \bigcup\{p_i\}$ be the initial collection of
shortest paths, and note that $|\mathcal{P}| = |P|$. In addition,
let $\mathcal{P}_2 \subseteq \mathcal{P}$ be the set of paths the
remain after the termination of the main loop of the algorithm,
and $\mathcal{P}_1 = \mathcal{P} \setminus \mathcal{P}_2$.
Finally, let $\mathcal{A}_1$ be the set of paths that our
algorithm satisfies during the main loop of the algorithm, and let
$\mathcal{A}_2$ be the set of paths that the algorithm satisfies
during the execution of the algorithm from
Theorem~\ref{th:localtoglobal}. In what follows, we prove that $|
\mathcal{A}_1| = \Omega(1 / (n|P|)^{1/3}) \cdot |\mathcal{P}_1|$,
and $|\mathcal{A}_2| = \Omega(1 / (n|P|)^{1/3}) \cdot
|\mathcal{P}_2|$. Consequently, we obtain that the number of paths
satisfied by our algorithm is
$$
|\mathcal{A}_1| + |\mathcal{A}_2| =
\Omega\left(\frac{1}{(n|P|)^{1/3}}\right) \cdot (|\mathcal{P}_1| +
|\mathcal{P}_2|) = \Omega\left(\frac{1}{(n|P|)^{1/3}}\right) \cdot
|P| \ .
$$

The fact that $|\mathcal{A}_1| = \Omega(1 / (n|P|)^{1/3}) \cdot
|\mathcal{P}_1|$ easily follows by observing that in each step of
the main loop of the algorithm, one path is satisfied while less
than $(n|P|)^{1/3}$ paths are discarded. Hence, we are left to
prove that $|\mathcal{A}_2| = \Omega(1 / (n|P|)^{1/3}) \cdot
|\mathcal{P}_2|$. We establish a somewhat stronger result by
demonstrating that $|\mathcal{A}_2| = \Omega(1 /
(n|\mathcal{P}_2|)^{1/3}) \cdot |\mathcal{P}_2|$. For this
purpose, consider two paths $p_1,p_2 \in \mathcal{P}_2$ that are
in conflict. We associate the conflict between these paths to an
arbitrary undirected edge that gets a different direction when
$p_1$ and $p_2$ are oriented, and place one token on this edge.
Notice that each path of $\mathcal{P}_2$ is in conflict with at
least $(n|P|)^{1/3}$ other paths in $\mathcal{P}_2$; otherwise,
the main loop would not have terminated. This implies that if we
place a token for each pair of conflicting paths in
$\mathcal{P}_2$ as shown before then the undirected edges of $G$
have at least $(n|P|)^{1/3} \cdot |\mathcal{P}_2| / 2 \geq n^{1/3}
|\mathcal{P}_2|^{4/3} / 2$ tokens placed on them. As a
consequence, there must be a vertex that has at least $t =
|\mathcal{P}_2|^{4/3} / (2n^{2/3})$ tokens placed on the
undirected edges in its local neighborhood. We next argue that if
some vertex has $t$ tokens in its local neighborhood then there
must be $\Omega(\sqrt{t})$ paths that cross that vertex. As a
result, we attain that the number of paths that cross the vertex
$v$, i.e., the vertex that a maximal number of paths from
$\mathcal{P}_2$ cross, is at least $\Omega(\sqrt{t}) =
\Omega(|\mathcal{P}_2|^{2/3} / n^{1/3})$. By
theorem~\ref{th:localtoglobal}, our algorithm satisfies a constant
fraction of these requests, namely, $|\mathcal{A}_2| = \Omega(1 /
(n|\mathcal{P}_2|)^{1/3}) \cdot |\mathcal{P}_2|$, as required.

For the purpose of establishing the above argument, consider some
vertex $u$ that has $t$ tokens in its local neighborhood. Let us
focus on some edge $e$ in this local neighborhood that has $r$
paths that traverse in one direction and $\ell$ paths that
traverse in the other direction. Notice that such an edge is
assigned $r\cdot\ell$ tokens. This implies that if the local
neighborhood of $u$ consists only of the edge $e$ then the minimal
number of paths that cross $u$ corresponds to the solution of
$\min\{r + \ell : r\cdot\ell = t\}$. One can easily verify that
the solution for this expression is $r = \ell = \sqrt{t}$, that
is, the number of paths is $\Omega(\sqrt{t})$. Note that when
there is more than one edge in the local neighborhood of $u$ then
any path may cross at most two edges. As a result, if we denote
the set of edges in the local neighborhood of $u$ by $E_u$, then
the minimal number of paths that cross $u$ dominates the solution
of $\min\{\sum_{e \in E_u} (r_e + \ell_e) / 2 : \sum_e
(r_e\cdot\ell_e) = t\}$; here, $r_e$ and $\ell_e$ indicate the
number of paths traversing edge $e$ in one direction and the other
direction, respectively. One can easily demonstrate that the
solution for the above expression is obtained by assigning
non-zero values only to one pair of $r_e,\ell_e$ variables,
namely, it is equivalent to the solution for the single edge
case.~
\end{proof}

\section{Other Orientation Variants} \label{sec:variants}
In this section, we study two well-motivated variants of the
orientation problem: the first is maximum mixed graph orientation
\emph{with fixed paths}, and the other is maximum mixed
\emph{grid} orientation.

\subsection{Orientation with fixed paths}
We consider the maximum mixed graph orientation \emph{with fixed
paths} problem. This variant is identical to the maximum mixed
graph orientation problem with the exception that each request
$(s,t) \in P$ is also associated with a fixed path $p$ from $s$ to
$t$ in the graph. With this modified definition in mind, a request
$(s,t)$ is satisfied only if the edges of the path $p$ are
oriented from the vertex $s$ towards the vertex $t$. Note that
this variant is seemingly simpler than maximum mixed graph
orientation since the only computational task is to decide which
requests to satisfy, and there is no need to decide which paths
will be used to satisfy those requests. This is also one of our
motivations for studying this variant, hoping that it will shed
some light on the original problem that would lead to a reduction
in the gap between its lower and upper approximation bounds.

We prove that the maximum mixed graph orientation with fixed paths
problem is NP-hard to approximate to within a factor of
$\max\{1/|P|^{1-\epsilon}, 1/m^{1/2 - \epsilon}\}$, for any
$\epsilon > 0$. In fact, we establish this result even when the
underlying graph is undirected. As a consequence, we attain that
this problem is provably harder than the maximum mixed (or
undirected) graph orientation problem, although it may seem
simpler at first glance. Our proof is based on showing that the
problem under consideration captures the well-known \emph{maximum
independent set} problem as a special case.

\medskip \noindent {\bf A hardness of approximation result.}
An input instance for the \emph{maximum independent set} problem
consists of an undirected graph $G' = (V',E')$. The goal is to
find an independent set of maximum size in the graph. An
independent set is a collection of vertices that do not have any
edges between them. This problem is known to be NP-hard to
approximate within a factor of $1/|V'|^{1-\epsilon}$, for any
$\epsilon > 0$~\cite{Zuckerman07}. We next show a value-preserving
reduction from this problem to our maximum mixed graph orientation
with fixed paths problem.

Given an input instance of maximum independent set, we construct
an input instance for our problem that consists of the undirected
graph presented in Figure~\ref{fig:fixedpaths}(a). Specifically,
we begin by creating a graph with $n' = |V'|$ pairs of $s_i,t_i$
vertices, corresponding to the vertices of $G'$, such that each
such pair is connected by a path $p_i$. We intersect all these
connecting paths in a grid-like fashion. Then, each intersection
point is replaced by one of the gadgets exhibited in
Figures~\ref{fig:fixedpaths}(b) and \ref{fig:fixedpaths}(c). The
gadget $g_{i,j}$ that replaces the intersection point of paths
$p_i$ and $p_j$ has $4$ vertices: $v_i$ and $u_i$ that are
appropriately added to path $p_i$, and $v_j$ and $u_j$ that are
appropriately added to path $p_j$. The edges within the gadget has
the form described in Figure~\ref{fig:fixedpaths}(b) if $(i,j)
\notin E'$, or the form described in
Figure~\ref{fig:fixedpaths}(c) if $(i,j) \in E'$. In the latter
case, the path $p_j$ is also modified to consist of the vertices
$v_i$ and $u_i$, so its subpath inside the gadget is $\langle v_j,
u_i, v_i, u_j \rangle$. In addition, the set of requests $P$ for
our problem consist of all $n'$ pairs $(s_i,t_i)$ with their
corresponding path $p_i$.

\begin{figure}[!hbt]
\centerline{ \scalebox{0.61}{ 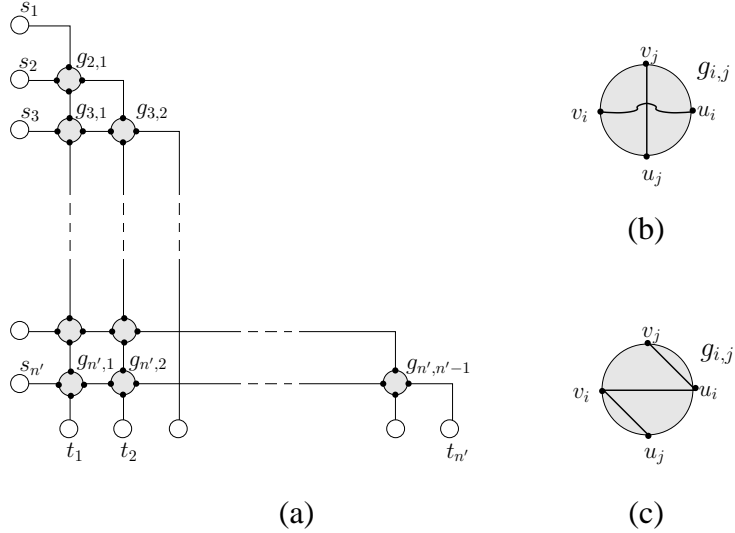 } }
\caption{(a) The graph resulting from the reduction. (b) The
gadget that is used in case $(i,j) \notin E'$. Note that the edges
$(v_i, u_i)$ and $(v_j, u_j)$ do not intersect. (c) The gadget
that is used in case $(i,j) \in E'$.} \label{fig:fixedpaths}
\end{figure}

One can easily validate that a solution $S\subseteq V'$ for the
maximum independent set problem implies an orientation in the
newly-created instance that satisfies the same number of requests.
Specifically, if $i,j \in S$ then clearly $(i,j) \notin E'$, and
thus, the paths $p_i$ and $p_j$ do not share edges. As a result,
one can simultaneously satisfy both request $(s_i,t_i)$ and
$(s_j,t_j)$ by orienting each of their paths from its source
vertex to its target vertex. Conversely, it is not difficult to
verify that given an orientation in newly-created instance that
satisfies some set of requests, one can perform a similar
value-preserving transformation in the opposite direction. In
particular, notice that if $i \notin S$ then there must be $j \in
S$ such that $(i,j) \in E'$, and hence, the paths $p_i$ and $p_j$
require to orient the edge $(v_i,u_i)$ in the gadget $g_{i,j}$ in
conflicting directions. Consequently, one cannot simultaneously
satisfy both underlying requests.

As a result of this value-preserving reduction, and in conjunction
with the hardness result presented by
Zuckerman~\cite{Zuckerman07}, we attain the following
inapproximability result. Recall that $|P| = |V'|$, and notice
that the number of edges in the newly-created instance is $m =
O(|V'|^2)$.

\begin{theorem}
The maximum mixed graph orientation with fixed paths problem is
NP-hard to approximate within a factor of
$\max\{1/|P|^{1-\epsilon}, 1/m^{1/2 - \epsilon}\}$, for any
$\epsilon > 0$.
\end{theorem}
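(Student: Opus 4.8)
The plan is to show that the construction depicted in Figure~\ref{fig:fixedpaths} is a polynomial-time, value-preserving reduction from maximum independent set to maximum mixed graph orientation with fixed paths, and then to pull Zuckerman's inapproximability bound~\cite{Zuckerman07} back through it while keeping track of how the two size parameters $|P|$ and $m$ relate to $|V'|$.

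First I would make precise the correspondence between feasible solutions, in both directions. In the forward direction, given an independent set $S \subseteq V'$, orient every path $p_i$ with $i \in S$ from $s_i$ to $t_i$; since no two indices of $S$ are adjacent in $G'$, any gadget $g_{i,j}$ traversed by two of these paths has the form of Figure~\ref{fig:fixedpaths}(b), in which the two paths use disjoint edges, so no conflict arises and all $|S|$ of the corresponding requests are satisfied. In the converse direction, given an orientation satisfying a request set $T$, set $S = \{\, i : (s_i,t_i) \in T \,\}$; if some $i,j \in S$ were adjacent in $G'$, both satisfied paths would pass through the Figure~\ref{fig:fixedpaths}(c) gadget $g_{i,j}$ and would force the edge $(v_i,u_i)$ in opposite directions, a contradiction. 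Hence $S$ is independent with $|S| = |T|$, and the optimum of the constructed orientation instance equals the optimum of the independent set instance. The construction is plainly polynomial-time, and the graph it produces is undirected, so the hardness applies already to the special case with no pre-directed edges.

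Next I would read off the parameters: the instance has $|P| = n' := |V'|$ requests, and since it consists of $\binom{n'}{2}$ constant-size gadgets together with the $O(n')$ path segments joining them, it has $m = \Theta(n'^2)$ edges. Now assume for contradiction a polynomial-time algorithm with approximation ratio $\rho \geq 1/|P|^{1-\epsilon} = 1/n'^{\,1-\epsilon}$ for some fixed $\epsilon > 0$. Running it on the reduced instance and reading back the corresponding index set yields, in polynomial time, an independent set of size at least $\rho\cdot\opt \geq \opt/n'^{\,1-\epsilon}$, i.e.\ a $1/n'^{\,1-\epsilon}$-approximation for maximum independent set, contradicting~\cite{Zuckerman07}. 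For the $m$-parametrization, assume instead $\rho \geq 1/m^{1/2-\epsilon}$; since $m = \Theta(n'^2)$ we have $m^{1/2-\epsilon} = \Theta(n'^{\,1-2\epsilon})$, hence $\rho \geq 1/n'^{\,1-\epsilon}$ for all sufficiently large $n'$, and the previous contradiction applies verbatim. Combining the two cases gives the stated inapproximability within $\max\{1/|P|^{1-\epsilon}, 1/m^{1/2-\epsilon}\}$ for every $\epsilon > 0$.

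This proof is a routine composition of a value-preserving reduction with a known hardness result, so there is no genuine obstacle; the only care required is bookkeeping — checking that the reduction preserves the optimum exactly rather than merely approximately, that it runs in polynomial time, and that the passage between the $|P|$-scale and the $m$-scale is handled consistently for the $\epsilon$'s. The single point most worth scrutinizing is the converse direction of the reduction, which is precisely where the conflict structure of the Figure~\ref{fig:fixedpaths}(c) gadget is used and where the fixed-path restriction is essential: a request can be satisfied only along its designated path, so a conflicting gadget cannot be bypassed by rerouting.
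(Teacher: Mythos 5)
Your proposal is correct and follows essentially the same route as the paper: the identical gadget-based value-preserving reduction from maximum independent set, the same parameter accounting $|P|=|V'|$ and $m=\Theta(|V'|^2)$, and the same appeal to Zuckerman's bound. Your extra care with the $\epsilon$-translation between the $m$- and $|P|$-scales is a fair (if minor) tightening of what the paper leaves implicit.
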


\subsection{Orientation in grid networks}
We study the maximum mixed \emph{grid} orientation problem. This
variant is identical to the maximum mixed graph orientation
problem with the additional restriction that the graph is a grid.
A $n \times m$ \emph{grid} network is a graph with a vertex set $V
= \{1, \ldots, n\} \times \{1,\ldots, m\}$, and an edge set $E$
consisting of horizontal edges, i.e., edges $((i, j),(i, j + 1))$
for all $j = \{1,\ldots,m-1\}$, and vertical edges, i.e., edges
$((i, j),(i+1, j))$ for all $i = \{1,\ldots,n-1\}$. Note that the
study of this variant is motivated by applications in networking.

We prove that the maximum mixed grid orientation problem is at
least as hard as the \emph{maximum directed cut} problem.
Consequently, approximating our problem within factors of $12/13
\approx 0.923$ and $\alpha_{\mathrm{GW}} \approx 0.878$ is NP-hard
and Unique Game-hard, respectively. Interestingly, this finding
comes in contrast with the results attainable for the undirected
grid setting. This latter setting can be solved to optimality in
polynomial-time, and in particular, when the grid is not a path,
that is, when $n,m > 1$, all the requests in $P$ can be satisfied.

\medskip \noindent {\bf A hardness of approximation result.}
An input instance for the \emph{maximum directed cut} problem
consists of a directed graph $G' = (V',E')$. The goal is to find a
directed cut of maximum size in the graph. The size of a cut $A
\subseteq V$ is the number of directed edges $(u,v) \in E'$ such
that $u \in A$ and $v \in V' \setminus A$. Approximating this
problem within factors of $12/13 \approx 0.923$ and
$\alpha_{\mathrm{GW}} \approx 0.878$ is known to be
NP-hard~\cite{Hastad01} and Unique Games-hard~\cite{KhotKMO07},
respectively. In what follows, we present a value-preserving
reduction from this problem to our maximum mixed grid orientation
problem.

Given an input instance of maximum directed cut, we construct an
input instance for our problem which consists of the mixed grid
presented in Figure~\ref{fig:grid}(a). Specifically, we create a
grid whose dimensions are $n = 2|V'|-1$ and $m = 3$. We associate
each vertex $v_i \in V'$ with the vertex $(2i-1, 1)$ in the grid.
The edges incident on each vertex $(2i, 1)$ in the grid are
oriented away from that vertex, and the edges along the perimeter
of the sub-grid that consists of the second and third vertex
columns are oriented in a way that creates a directed cycle. In
addition, the set of requests for our problem is defined to be $P
= E'$.

One can validate that a solution $A \subseteq V'$ for the maximum
directed cut problem implies an orientation in the newly-created
instance that satisfies the same number of requests. Specifically,
if $v_i \in A$ we orient the single undirected edge incident on
vertex $v_i$ of the grid away from that vertex, and if $v_i \in V
\setminus A$ we orient that edge towards vertex $v_i$. Then, it is
easy to see that if an edge of $E'$ is cut by the solution $A$
then the corresponding request is satisfied in the orientation.
Conversely, it is not difficult to verify that given an
orientation in newly-created instance that satisfies some set of
requests, one can perform a similar value-preserving
transformation in the opposite direction. In particular, this side
of the proof builds upon the observation that any request
$(v_i,v_j)$ may only be satisfied by a path that crosses the
undirected edges incident on $v_i$ and $v_j$. The orientation of
all those undirected edges define the cut in the initial problem.

As a result of this value-preserving reduction, and in conjunction
with the hardness results presented by H{\aa}stad~\cite{Hastad01}
and Khot et al.~\cite{KhotKMO07}, we attain the following
inapproximability result.

\begin{theorem}
The maximum mixed grid orientation problem is NP-hard to
approximate within a factor of $12/13 \approx 0.923$, and Unique
Games-hard to approximate within a factor of $\alpha_{\mathrm{GW}}
\approx 0.878$.
\end{theorem}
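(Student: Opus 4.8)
The plan is to prove the theorem through the value-preserving reduction from \emph{maximum directed cut} sketched above, verifying both directions of the correspondence between directed cuts in the input graph $G' = (V',E')$ and orientations of the constructed mixed grid, and then invoking the inapproximability results of H{\aa}stad~\cite{Hastad01} and Khot et al.~\cite{KhotKMO07}. First I would pin down the construction: the grid has dimensions $n = 2|V'|-1$ and $m = 3$; each $v_i \in V'$ is identified with the grid vertex $(2i-1,1)$; every edge incident on an even-row, first-column vertex $(2i,1)$ is pre-oriented \emph{away} from it; and the boundary edges of the sub-grid spanned by columns $2$ and $3$ are pre-oriented so as to form a directed cycle. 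The only undirected edges are the horizontal edges $\big((2i-1,1),(2i-1,2)\big)$ that link each ``vertex gadget'' at $(2i-1,1)$ to this cyclic core. The request set is $P = E'$, where the directed edge $(v_i,v_j)\in E'$ becomes the request between the grid vertices of $v_i$ and $v_j$.

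For the forward direction, given a directed cut $A \subseteq V'$, I would orient the undirected edge at $v_i$ away from $(2i-1,1)$ if $v_i \in A$ and towards $(2i-1,1)$ otherwise, and orient the remaining undirected edges arbitrarily. The verification is then routine: if $(v_i,v_j)$ is cut, so that $v_i \in A$ and $v_j \notin A$, one follows the undirected edge out of $v_i$'s gadget into the core, traverses the pre-oriented directed cycle, and descends the undirected edge into $v_j$'s gadget; the ``away'' orientations at the even-row vertices and the cyclic orientation of the core guarantee that this walk is a directed path. Hence the orientation satisfies at least as many requests as $A$ cuts.

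The converse is the crux and the main obstacle. One must show that any orientation satisfying a set $R \subseteq E'$ of requests induces a directed cut of size $|R|$ in $G'$. The key structural claim, which the reduction builds upon, is that a request $(v_i,v_j)$ can be satisfied \emph{only} by a directed path that exits $v_i$'s gadget through its undirected edge and enters $v_j$'s gadget through its undirected edge. Establishing this requires a careful case analysis of directed $s$--$t$ paths in the $(2|V'|-1)\times 3$ grid, using that the pre-oriented edges at each $(2i,1)$ point strictly outward to argue that the only way into, and out of, a vertex gadget that can lie on a satisfying path is its unique undirected edge, so that no path can ``shortcut'' between two vertex gadgets without passing through the cyclic core. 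Once this is in place, a satisfied request $(v_i,v_j)$ forces the undirected edge at $v_i$ to point away from $(2i-1,1)$ and the one at $v_j$ to point towards $(2j-1,1)$; setting $A = \{\,v_i : \text{the undirected edge at } v_i \text{ is oriented away from } (2i-1,1)\,\}$ then makes every request in $R$ a cut edge of $A$, giving a cut of size $|R|$.

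Finally, since $|P| = |E'|$ and the reduction preserves the optimum value exactly, any $\rho$-approximation algorithm for maximum mixed grid orientation would yield a $\rho$-approximation for maximum directed cut; the $12/13$ NP-hardness and the $\alpha_{\mathrm{GW}}$ Unique-Games-hardness of the latter~\cite{Hastad01,KhotKMO07} therefore transfer immediately, completing the proof. I expect essentially all of the real work to be in the path case analysis of the converse direction; the remaining pieces are bookkeeping.
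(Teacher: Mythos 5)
Your proposal is correct and follows essentially the same route as the paper: the identical grid construction ($n=2|V'|-1$, $m=3$, directed "away" edges at the even first-column vertices, a directed cycle on the perimeter of columns 2 and 3, one undirected edge per vertex gadget), the same forward map from a cut $A$ to an orientation, and the same key observation for the converse (a request $(v_i,v_j)$ can only be satisfied via a path crossing the undirected edges at $v_i$ and $v_j$), after which the hardness of maximum directed cut transfers. The paper likewise leaves the converse's path case analysis at the level of a stated observation, so your treatment matches it in both structure and level of detail.
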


\begin{figure}[!hbt]
\centerline{ \scalebox{0.62}{ 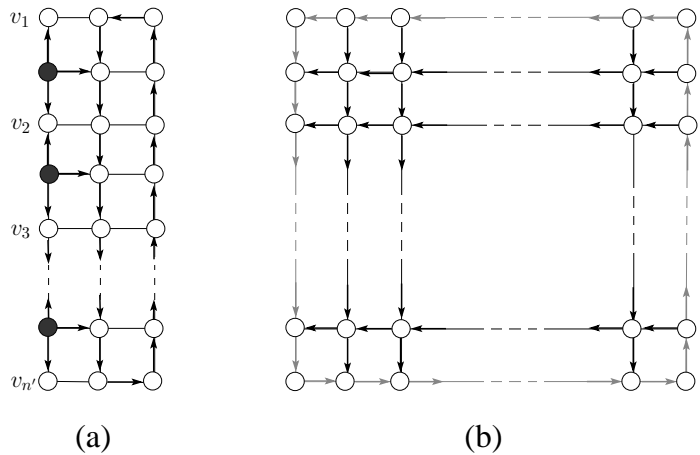 } }
\caption{(a) The grid resulting from the reduction. Note that $n'
= |V'|$. (b) An orientation of an undirected grid that admits a
directed path between any two vertices.} \label{fig:grid}
\end{figure}

\medskip \noindent {\bf Orientation of undirected grids.}
The above-mentioned hardness result comes in contrast with the
results attainable for the undirected grid setting. This latter
setting can be solved to optimality in polynomial-time.
Specifically, when the grid is a path, i.e., when either $m$ or
$n$ equals $1$, there are optimal polynomial-time algorithms for
the problem~\cite{MedvedovskyBZS08,DornHKNU2011}, and when $n,m >
1$, there is a simple orientation that satisfies all the requests
in $P$. This orientation can be obtained by creating a directed
cycle along the perimeter of the grid, and then, orienting all the
remaining horizontal and vertical edges consistently. A concrete
example of such an orientation is presented in
Figure~\ref{fig:grid}(b). One can easily prove that this
orientation admits a directed path between any two vertices of the
graph.

\bibliographystyle{abbrv}
\bibliography{MixedOrientation}

\appendix
\section{Additional Details} \label{appsec:AdditionalDetails}
In this section, we complete the details omitted from the main
part of the paper.

\subsection{Proof of Theorem~\ref{th:treewisthalg}}
The algorithm begins by computing a tree decomposition of width
$k$ for the undirected version of the underlying graph. Note that
this task can be done in polynomial-time since our graph has
bounded treewidth~\cite{Bodlaender96}. A tree decomposition is a
pair $(T, X)$, where $X = \{X_1, \ldots, X_\ell\}$ is a collection
of subsets such that each $X_i \subseteq V$, and $T$ is a tree
whose nodes are the subsets in $X$. Note that $\ell =
\mathrm{poly}(n)$ in our case. The decomposition satisfies the
following properties: (1) $\bigcup X_i = V$, (2) the incident
vertices of every edge of the graph are contained in some subset
in $X$, and (3) if $X_i$ and $X_j$ contain a vertex $v$ then all
the nodes $X_k$ in the unique path between $X_i$ and $X_j$ contain
$v$ as well. The width of the tree is defined to be $\max |X_i| -
1$.

Given the tree decomposition $(T ,X)$, the algorithm proceeds by
computing a shortest path for each request in $P$. The paths are
then classified into at most $\ceil{\log \ell} = O(\log n)$
classes such that for every class, an orientation that satisfies
$\Omega(1/k)$-fraction of its paths can be efficiently computed.
As a consequence, by separately computing an orientation for each
class, and then picking the option that satisfies the highest
number of paths, we are guaranteed to satisfy $\Omega(|P|/(k \log
n))$ of all the requests.

For the purpose of constructing the first class, we find a
centroid node $X_t$ of $T$, that is, a node whose removal breaks
the tree into a collection of subtrees, each of which has at most
half of the vertices in $T$. Note that any tree has a centroid
(see, e.g.,~\cite{FredericksonJ80}). We assign all the paths that
cross a vertex from $X_t = \{v_1,\ldots,v_r\}$ to class
$\mathcal{C}_1$. We further partition $\mathcal{C}_1$ into $r$
collections $\mathcal{C}_{1,1}, \ldots, \mathcal{C}_{1,r}$ such
that a path $p$ is assigned to the collection $\mathcal{C}_{1,j}$
if it crosses $v_j$ but does not cross any of the vertices in
$\{v_1,\ldots, v_{j-1}\}$. Notice that we can satisfy
$\Omega(|\mathcal{C}_{1,j}|)$ paths from the collection $j$ by
applying Theorem~\ref{th:localtoglobal}. One can now easily
validate that executing the mentioned algorithm on each collection
separately, and then picking the option that satisfies the highest
number of paths results in an orientation satisfying
$\Omega(|\mathcal{C}_1|/k)$ requests since $r \leq k+1$.

To construct the second class, we first remove the node $X_t$ from
$T$ to obtain a forest of tree decompositions. For each tree
decomposition, we compute a centroid node, and in the same way as
above, we assign a path to $\mathcal{C}_2$ if it crosses a vertex
from the subsets associated with these centroid nodes. Note that
we only assign paths that were not assigned to the first class.
Using the same arguments as above, we can compute an orientation
that satisfies $\Omega(|\mathcal{C}_2|/k)$ requests. In
particular, one can validate that each path crosses vertices from
exactly one centroid node; otherwise, it should have been assigned
to the first class by properties~(2) and (3) of the tree
decomposition. We now proceed recursively in the same way to
construct the other classes as long as the decompositions under
consideration are not empty. Since the maximal size of a subtree
decreases by at least half in each level of the recursion, this
process terminates within $\ceil{\log \ell}$ steps, and hence,
there are indeed at most $\ceil{\log \ell}$ classes.~

\subsection{Proof of Theorem~\ref{th:feedbackvertexalg}}
The algorithm begins by finding a feedback vertex set $F =
\{v_1,\ldots, v_\ell\}$ in the undirected version of the graph,
namely, a set of vertices whose removal turns the underlying
undirected graph into a tree. Although the computational task of
finding a feedback vertex set with a minimum cardinality is
NP-hard, there is a $2$-approximation algorithm for this
problem~\cite{BeckerG94,BafnaBF99,ChudakGHW98}. Therefore, we may
assume that the cardinality of that set satisfies $\ell \leq 2k$.
The algorithm proceeds by computing a shortest path $p_i$ for each
request $(s_i,t_i) \in P$. Then, each path $p_i$ is classified
into one of $\ell + 1$ classes: if $p_i$ crosses the vertex $v_j$
and none of the vertices in $\{v_1,\ldots, v_{j-1}\}$ then it is
assigned to class $\mathcal{C}_j$; otherwise, if $p_i$ does not
cross any of the vertices of $F$, then it is assigned to class
$\mathcal{C}_{\ell + 1}$. Notice that we can satisfy
$\Omega(|\mathcal{C}_j|)$ paths from any class $j$ by applying
Theorem~\ref{th:localtoglobal}. Also notice that by deleting the
vertices of $F$ from the graph $G$, we obtain a mixed graph which
is a forest of trees, and all the paths in $\mathcal{C}_{\ell+1}$
still remain connected. This mixed tree orientation setting is
known to admit an efficient $\Omega(1/\log n)$-approximation
algorithm~\cite{ElberfeldSDSS11}, and thus, we can satisfy
$\Omega(|\mathcal{C}_{\ell+1}| / \log n)$ paths from the class
$\ell+1$. One can now easily validate that executing the mentioned
algorithms on each class separately, and then picking the option
that satisfies the highest number of paths results in an
orientation satisfying $\Omega(|P|/(k + \log n))$ of all
requests.~

\subsection{Proof of Theorem~\ref{th:generalalg1}}
The algorithm computes a shortest path $p_i$ for each request
$(s_i,t_i) \in P$. Then, it considers those shortest paths in some
arbitrary order, and orients them one after the other. In
particular, when a path $p$ is oriented, all the pending paths
that are in conflict with $p$ are discarded. This greedy
orientation procedure continues as long as a path $p$ under
consideration is not in conflict with more than $\sqrt{\Delta
|P|}$ pending paths. When this happens, there must be some vertex
$v$ on $p$ that at least $\sqrt{\Delta |P|} / \Delta = \sqrt{|P|/
\Delta}$ pending paths cross. This claim holds since the length of
$p$ is known to be at most $\Delta$. The algorithm then employs
the local-to-global orientation algorithm from
Theorem~\ref{th:localtoglobal} with respect to the vertex $v$ and
the corresponding set of pending paths to complete the orientation
of the graph.

One can easily verify that the algorithm computes a feasible
orientation. Therefore, we next prove that the algorithm satisfies
$\Omega(1 / \sqrt{\Delta |P|})$-fraction of all requests. Notice
that if the algorithm employs our local-to-global orientation
algorithm then the resulting orientation indeed attains the
desired performance guarantee since the number of paths satisfied
only by this step is $\Omega(\sqrt{|P|/ \Delta}) = \Omega(1 /
\sqrt{\Delta |P|}) \cdot |P|$. Hence, we may assume that the
algorithm only makes greedy orientation steps. In this case, the
resulting orientation clearly achieves the desired performance
guarantee since one path is satisfied while at most $\sqrt{\Delta
|P|}$ paths are discarded in each of those greedy steps.
\end{document}